\def\BibTeX{{\rm B\kern-.05em{\sc i\kern-.025em b}\kern-.08em
    T\kern-.1667em\lower.7ex\hbox{E}\kern-.125emX}}
\newcommand{\sysname}{\textsc{OReO}\xspace}
\newcommand{\reorg}{\textsc{Reorganizer}\xspace}
\newcommand{\layout}{\textsc{Layout Manager}\xspace}
\newcommand{\anonsys}{SuperCollider\xspace}
\newcommand{\dumts}{\textsc{D-UMTS}\xspace}
\newcommand{\minihead}[1]{{\vspace{.45em}\noindent\textbf{#1.} }}
\newtheorem*{theorem*}{Theorem}
\newtheorem{theorem}{Theorem}[section]
\newtheorem{remark}{Remark}[section]
\newcommand{\squishitemize}{
 \begin{list}{$\bullet$}
  { \setlength{\itemsep}{0pt}
     \setlength{\parsep}{0pt}
     \setlength{\topsep}{0pt}
     \setlength{\partopsep}{0pt}
     \setlength{\leftmargin}{1.95em}
     \setlength{\labelwidth}{1.5em}
     \setlength{\labelsep}{0.5em} } }
\newcounter{Lcount}
\newcommand{\squishlist}{
    \begin{list}{\arabic{Lcount}. }
   { \usecounter{Lcount}
        \setlength{\itemsep}{0pt}
        \setlength{\parsep}{3pt}
        \setlength{\topsep}{0pt}
        \setlength{\partopsep}{0pt}
        \setlength{\leftmargin}{2em}
        \setlength{\labelwidth}{1.5em}
        \setlength{\labelsep}{0.5em} } }
\newcommand{\squishend}{\end{list}}
\newcommand{\etal}{et al.\ }
\newcommand{\E}{\mathbf{E}}
\newcommand{\cA}{\mathcal{A}}
\newcommand{\cD}{\mathcal{D}}
\newcommand{\cH}{{\mathcal{H}}}
\newcommand{\cQ}{\mathcal{Q}}
\newcommand{\cS}{\mathcal{S}}
\newcommand{\cC}{\mathcal{C}}
\newcommand{\revision}[1]{{\color{black} #1}\xspace}
\begin{document}

\title{Dynamic Data Layout Optimization \\ with Worst-case Guarantees 
}

\author{\IEEEauthorblockN{Kexin Rong}
\IEEEauthorblockA{\textit{Georgia Institute of Technology,}\\ \textit{VMware Research} \\ krong@gatech.edu}
\and
\IEEEauthorblockN{Paul Liu}
\IEEEauthorblockA{\textit{Stanford University} \\
paul.liu@stanford.edu}
\and 
\IEEEauthorblockN{Sarah Ashok Sonje}
\IEEEauthorblockA{\textit{Georgia Institute of Technology} \\
sarah.sonje@gatech.edu}
\and
\IEEEauthorblockN{Moses Charikar}
\IEEEauthorblockA{\textit{Stanford University} \\
moses@cs.stanford.edu}
}

\maketitle

\begin{abstract}
Many data analytics systems store and process large datasets in partitions containing millions of rows. By mapping rows to partitions in an optimized way, it is possible to improve query performance by skipping over large numbers of irrelevant partitions during query processing. This mapping is referred to as a data layout. Recent works have shown that customizing the data layout to the anticipated query workload greatly improves query performance, but the performance benefits may disappear if the workload changes. Reorganizing data layouts to accommodate workload drift can resolve this issue, but reorganization costs could exceed query savings if not done carefully.

In this paper, we present an algorithmic framework \sysname that makes online reorganization decisions to balance the benefits of improved query performance with the costs of reorganization. Our framework extends results from Metrical Task Systems to provide a tight bound on the worst-case performance guarantee for online reorganization, without prior knowledge of the query workload. Through evaluation on real-world datasets and query workloads, our experiments demonstrate that online reorganization with \sysname can lead to an up to 32\% improvement in combined query and reorganization time compared to using a single, optimized data layout for the entire workload.
\end{abstract}

\begin{IEEEkeywords}
data layout optimization, metrical task systems
\end{IEEEkeywords}

\section{Introduction}
To keep up with increasingly large data demands, modern data analytics systems partition data in chunks containing millions of records, which are then compressed and persisted in cost-effective storage options such as Amazon S3. These partitions are often the smallest unit for I/O operations, meaning that all partitions containing relevant data must be accessed in their entirety during query processing. 

The mapping from individual data records to different partitions, known as the \emph{data layout}, can have a significant impact on query performance. Query optimizers utilize partition-level metadata, such as the min-max ranges of each column, to determine which partitions can be skipped during query processing, thereby enhancing performance~\cite{graefe2009fast,oraclezm,sparkskipping}. By default, many systems simply partition the dataset according to one or more predefined sort columns, such as the arrival time of data records~\cite{sparksql,borthakur2011apache}. However, this means that queries unrelated to the arrival time may still need to access a large number of partitions.
Recent research shows that customizing the data layout to specific query workloads can significantly improve data-skipping performance~\cite{sun2014fine,yang2020qd,ding2021instance}. For example, Qd-tree~\cite{yang2020qd} uses query predicates to partition the dataset in a way that maximizes partitions skipped for the given query workload. 


\begin{figure}[t]
    \centering
    \includegraphics[width=\linewidth]{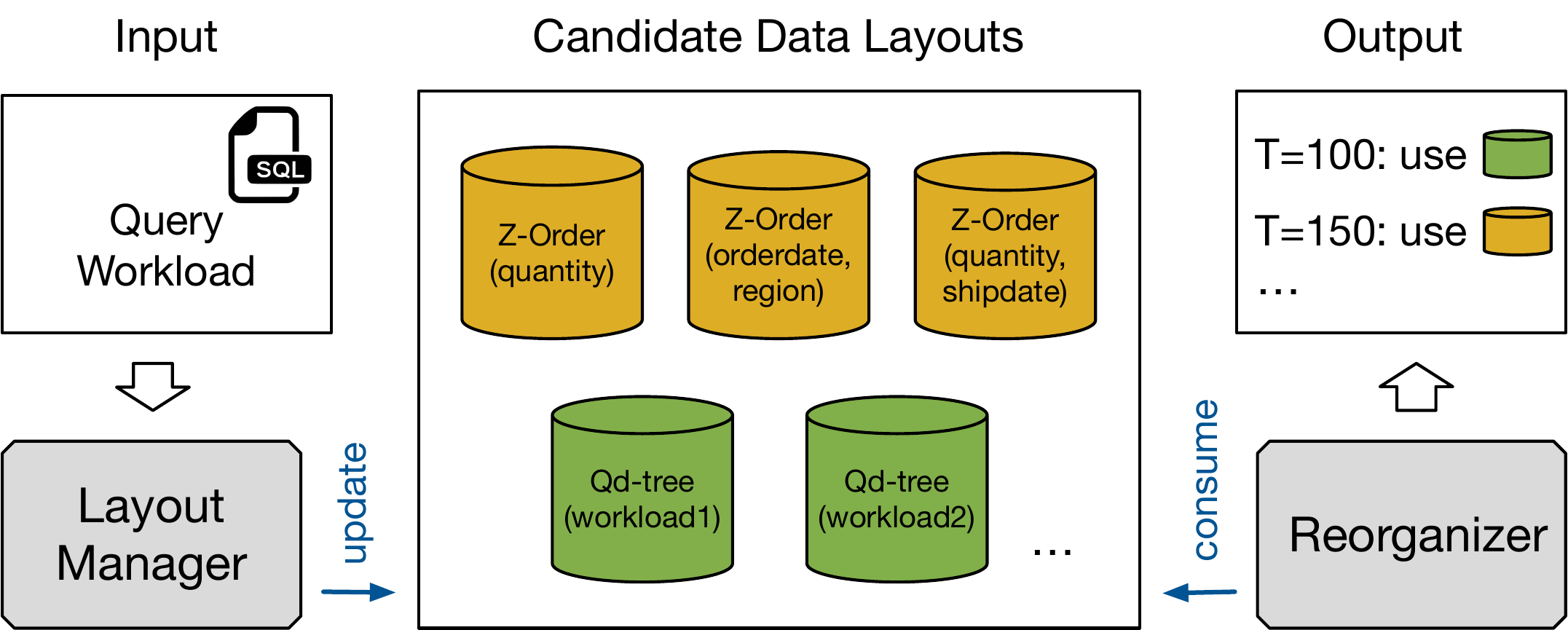}
    \caption{Overview of \sysname, an algorithmic framework for making online data reorganization decisions to minimize the total costs of query processing and data reorganization over an unknown query sequence. }
    \label{fig:overview}
\end{figure}

However, the performance of these workload-aware layouts can degrade significantly when the target query workload drifts. Reorganizing the data, or switching to a different data layout, to adapt to workload changes requires accessing and rewriting large amounts of data, and could be 10$\times$ to 100$\times$ more expensive than running a full scan query on the table. It may be worth incurring this cost if the new layout leads to future query cost savings that outweigh the upfront cost of reorganization. Since users pay for the cost incurred by both query processing and reorganization,  data analytics systems need to consider both costs instead of only focusing on reducing the query costs. In fact, rule-based reorganization heuristics have already been implemented in systems such as Snowflake\revision{~\cite{snowauto}} and Delta Lake\revision{~\cite{deltaauto}} to continuously optimize the organization of datasets during query processing. However, the heuristics are often set empirically without any guarantees of performance, and it is not clear how these systems can reason about the performance implications of their reorganization decisions.

In this work, we introduce \sysname (Online Re-organization Optimizer), an algorithmic framework that outputs a data reorganization schedule to minimize the total query and reorganization costs over an unknown query sequence, while providing provable worst-case performance guarantees (Figure~\ref{fig:overview}). Since it may not always be possible to access historical workloads, such as with private customer datasets, our approach explores the ``online" extreme on the design spectrum and assumes that {\em no prior knowledge of the query workload} is available. This forces us to design the algorithm to react to the query stream on the fly. Despite this, we show that our algorithm can perform favorably compared to algorithms that have access to the entire workload in advance. To achieve this, \sysname leverages recent advances in workload-aware data layouts to generate new candidate layouts during query processing, and makes reorganization decisions by adapting classical results from the study of Metrical Task Systems (MTS) in online learning~\cite{BLS92}. \sysname provides guarantees in the form of {\em competitive ratios}, which is the total cost of online reorganization divided by that of an optimal offline solution.

A key technical contribution lies in the ability of \sysname to decouple the process of generating data layouts from the process of making reorganization decisions. This separation is necessary because, in classic Metrical Task Systems (MTS), the system operates in a fixed set of states, equivalent to the set of candidate data layouts in our application. The total number of states directly affects the quality of the online solution, with a smaller state space leading to a better competitive ratio. However, the state space for online reorganization is prohibitively large and intractable to work with, as it includes all possible partitionings of a dataset. With prior knowledge of the workload, we can reduce the size of the state space by precomputing a small set of good-performing data layouts using workload-aware layout designs. Without such knowledge, the decision maker must adapt as the system observes more queries, adding better-performing data layouts to the state space incrementally. This requires the state space to change over the course of query processing, instead of remaining fixed as in traditional Metrical Task Systems.


To address this challenge, we propose a novel, \emph{dynamic} variant of the classic uniform metrical task system problem, which permits arbitrary modification of the state space during query processing. The modifications are modeled as state management queries, which can add and remove any system state at any time during query processing. This new framework allows us to separate the state generation and state transition into two independent components: the \reorg and the \layout. The \reorg adapts a classic randomized algorithm for uniform metrical task systems to support dynamic state spaces and achieves a provably tight competitive ratio, similar to guarantees provided in the original setup. The \layout, on the other hand, is responsible for updating the state space. It starts with a default layout such as partitioning by time, and generates new layouts tailored to the current workloads on-the-fly as the system observes more queries. The \layout also takes into account the diversity of the layouts generated to prune redundant states and keep the state space compact. Together, the framework offers a systematic approach for balancing query improvements and reorganization costs over an unknown query sequence. 

In summary, this paper makes the following contributions: 
\begin{itemize}
    \item A formal online algorithm framework, \sysname, to model the online layout optimization problem. The problem setup differs from traditional online learning literature in that it does not assume that all the states are known apriori.
    \item An MTS algorithm that achieves an asymptotically tight competitive ratio under dynamically changing states. 
    \item Evaluation of the framework on several real-world datasets and workloads. Our results show that online reorganization without workload knowledge improves upon an offline layout precomputed for the entire workload by up to 32\% in end-to-end compute time. 
\end{itemize}

\section{Background and Motivation}
\label{sec:background}

We start this section with a discussion of current industry practices for dynamic data layout optimization, which inspire the design of \sysname (\S~\ref{sec:scenario}). We then provide background on classic results from Metrical Task Systems, which serve as the theoretical foundation of \sysname (\S~\ref{sec:mts}). 


\subsection{Motivation: Dynamic Layout Optimization in Practice}
\label{sec:scenario}


Current systems often implement dynamic data layout optimization policies via predefined rules or thresholds. These heuristics, simple to implement and maintain, are particularly useful when historical workload data is not accessible, such as with private customer datasets. For example, Delta Lake provides a layout optimization feature based on Z-Order that can be activated manually or automatically when the number of small files exceeds a threshold during table creation and merges\revision{~\cite{deltaauto}}. Snowflake has a rule-based automatic clustering feature that reorganizes data automatically when more than a certain number of data partitions overlap in a specified column\revision{~\cite{snow}}. Notably, users incur charges based on the computational resources used during this reorganization when the automatic clustering is enabled. Therefore, it is crucial to balance the costs of query processing with those of reorganization in these systems. Both Delta Lake and Snowflake perform reorganization in the background through a separate process, thereby minimizing its impact on query performance.

\sysname draws inspiration from these current rule-based reorganization practices and aims to provide a more formal framework for considering the trade-off between query costs and reorganization costs. Similar to rule-based heuristics, \sysname responds to the query stream and makes reorganization decisions on-the-fly. However, unlike heuristic methods which are often set empirically, \sysname provides a provable worst-case performance guarantee over all possible input sequences. 



\subsection{Background: Metrical Task Systems}
\label{sec:mts}
\sysname builds upon classic results from Metrical Task Systems (MTS)~\cite{BLS92}. In MTS, the decision maker controls a system that can be in one of $n$ system states. The states are associated with a distance metric (i.e., obeying the triangle inequality), which defines the movement cost of switching system states. The system takes as input an unknown sequence of tasks, where each task incurs a (possibly different) service cost in each system state. For each new task, the system can choose to service it in the current state or to move to a different state and then service it by paying an additional movement cost. The goal of the decision maker is to minimize the sum of service costs and movement costs over the task sequence which is processed one-by-one, without knowledge of the future.
 
The main benchmark of comparison for MTS is the \emph{competitive ratio}. The competitive ratio is the worst case over all possible input sequences of the performance of the online algorithm, which observes one new task at a time, divided by the performance of the optimal offline algorithm that is presented with the entire task sequence in advance. Several algorithms with varying competitive ratios for MTS exist, each with different assumptions on the state space. The tightness of the bound is typically related to the size of the state space~\cite{borodin2005online, BLS92, IS98}. 

The key difference between our setup and the traditional MTS problem is that MTS works with a fixed set of states that is known to the decision maker apriori, whereas we would like to allow the state space to evolve. Our work introduces a novel MTS problem formulation that addresses this difference and achieves a tight competitive ratio that is logarithmic in the maximum size of the dynamic state space. The results are therefore of interest beyond the scope of the online layout optimization problem we study in this paper.

\section{Overview}
\label{sec:overview}
In this section, we present the problem formulation (\S~\ref{sec:problem}), an overview of \sysname's components and workflow (\S~\ref{sec:algoverview}), \revision{as well as the main assumptions and limitations (\S~\ref{sec:limitations})}.

\subsection{Problem Formulation}
\label{sec:problem}
We formulate the online layout optimization problem as a variant of uniform metrical task systems (UMTS) studied in the literature. We are given a set of states $\cS$ and an ordered stream of queries $\cQ$. For each $s \in \cS$ and $q \in \cQ$, there is some cost $c(s, q) \in [0, 1]$ of servicing the query $q$ in state $s$. To switch between states in $\cS$, there is a reorganization cost of $\alpha > 1$. The goal is to service all queries in $\cQ$ while minimizing the sum of query costs and reorganization costs over the entire sequence, reflecting the total compute costs users would pay to use systems like cloud data warehouses.

To model the introduction of new data layouts while processing the query stream, we propose a novel {\em dynamic} variant of UMTS, \dumts, where the set of states $\cS$ is allowed to vary over time. In \dumts, we permit arbitrary addition and removal of states from the state space $\cS$ in the midst of query processing, thus making the set of states dynamic. The flexibility of removing states allows for controlling the size of the state space in our applications. 
In the rest of the text, we use states and data layouts interchangeably in the context of describing our framework.



We model the query and reorganization costs as the total compute time incurred by querying and partitioning the datasets. To estimate the query cost, we use the fraction of the dataset accessed by each query, which has been shown to be a reliable proxy for query performance~\cite{yang2020qd, rong2020approximate}. To estimate the reorganization cost, we represent the relative overhead of reorganizing (which includes compressing and writing partitions) compared to querying (mostly reading partitions) via a parameter $\alpha$ in the cost model. $\alpha$ is the {\em expected} ratio between the compute time spent on reorganization compared to the time spent on a full table scan query ($\alpha := \mathbf{E}[\frac{\text{reorganization time}}{\text{full table scan}}]$). The value of $\alpha$ varies based on system configurations and can be measured experimentally. In our experiments, $\alpha$ typically falls in the range of $60\times$ to $100\times$, similar to values reported in previous work~\cite{ding2021instance}.

We analyze the worst-case performance of the framework using the classic \emph{oblivious} adversary model~\cite{borodin2005online}, meaning that an adversary must prepare the entire sequence for queries and state modifications in advance. Since the set of states is constantly changing through state modification requests, the adversary must additionally use the same set of states available to our algorithm at all times. Note that oblivious adversaries are quite powerful, as they necessarily force our algorithm to be randomized. If our algorithm is deterministic, the adversary can simulate our algorithm on its input in advance, and again prepare a sequence to remove the exact state that the system is currently in. Finally, we assume that the adversary does not have access to any random bits used by our algorithm.


\subsection{Framework Overview}
\label{sec:algoverview}
\sysname consists of two main components, the \layout and the \reorg (Figure~\ref{fig:overview}). 

The \layout is the producer of the dynamic state space. It constantly generates new data layouts according to recent samples of the query stream and issues state management queries to add and remove states to and from the state space. Examples of layout generation methods include simple heuristics such as sorting and Z-ordering~\cite{morton1966computer} with user-defined columns, as well as more sophisticated techniques that directly make use of query workloads such as bottom-up row grouping~\cite{sun2014fine} and Qd-tree~\cite{yang2020qd,ding2021instance}. The layout manager is agnostic to the underlying data layout generation mechanism as long as it supports the following two functionalities:
\begin{itemize}[leftmargin=*,topsep=1pt]
\item \texttt{generate\_layout}($\cD, \cQ, k$): This procedure uses the given dataset sample $\cD$, query workload $\cQ$, and target number of partitions $k$ to generate a data layout, which is a mapping function that assigns data records to partitions. 
\item \texttt{eval\_skipped}($s, \cQ$): This procedure estimates the fraction of data partitions skipped on the given query workload $\cQ$ and data layout $s$. 
\end{itemize}

\begin{figure}[t]
    \centering
    \includegraphics[width=0.8\linewidth]{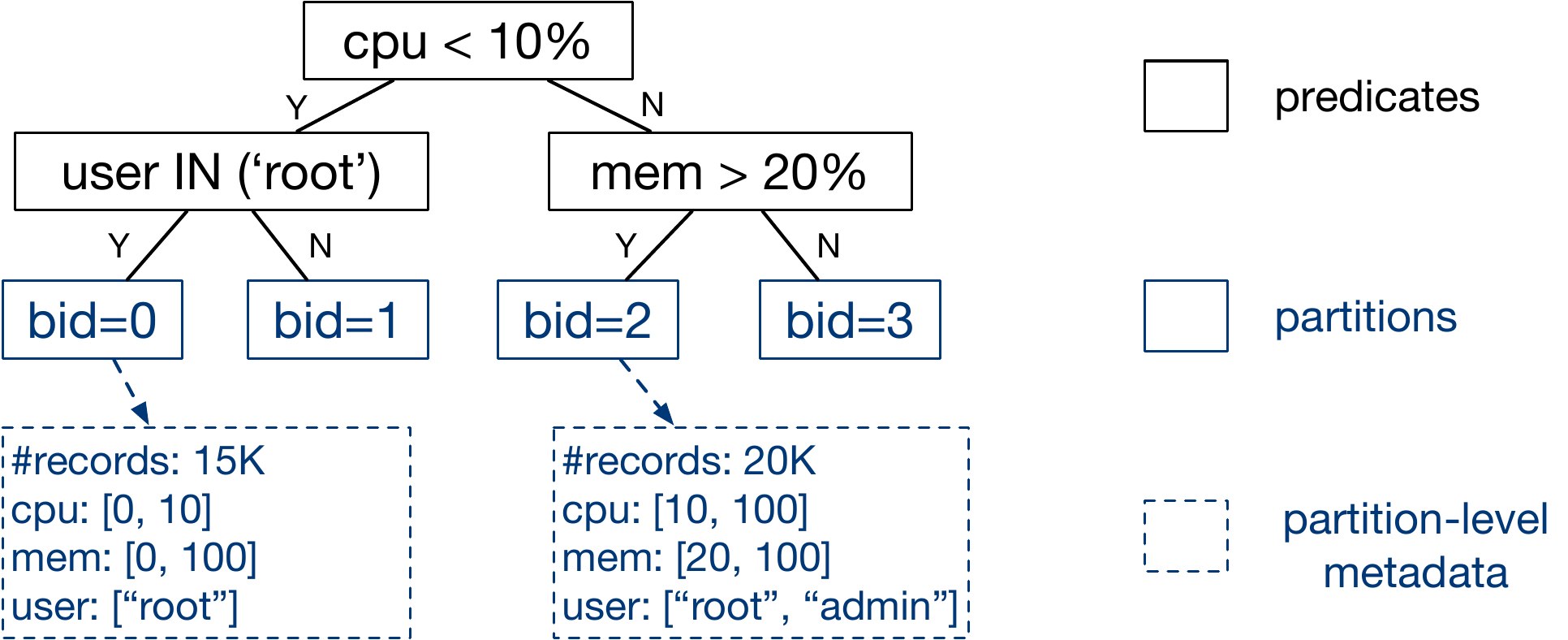}
    \caption{Example of a Qd-tree and its partition-level metadata. }
    \label{fig:qd-example}
    \vspace{-1em}
\end{figure}

As a concrete example, suppose Qd-tree is used to partition the dataset. The layout manager first calls \texttt{generate\_layout} to compute a candidate layout on a small sample of the dataset (e.g., 0.1\% to 1\% of the data). Note that creating layouts based on these small samples is considerably less resource-intensive compared to executing a full reorganization. Furthermore, previous studies have demonstrated that layouts derived from small samples are sufficiently accurate representations of those that would be generated from the complete dataset. \texttt{generate\_layout} returns the generated layout in the form of a binary decision tree, with each inner node containing a predicate selected from the query workload (Figure~\ref{fig:qd-example}). We can assign data records to partitions by routing records through this tree until they reach one of the leaf nodes. In addition, by comparing the query with partition-level metadata (e.g., partition size, range/distinct values of each column) stored in the tree, the query optimizer can identify a list of partitions that can be skipped for the query. Since \texttt{generate\_layout} operates on a small sample and \texttt{eval\_skipped} references metadata instead of the actual data, the layout manager can efficiently explore multiple layout candidates.

The second component, the \reorg, is the consumer of the dynamic state space. It queries the state space to get a set of data layouts that the system can choose from at a given time. During query processing, the \reorg observes each new query and makes the decision of whether to keep the current data layout or switch to a new layout using D-UMTS. Similar to common practices in current systems~\cite{armbrust2020delta, snow}, we assume that reorganization happens via a separate process in the background using a (partial) copy of the data and that queries are still serviced on the existing data layout while reorganization is in progress. After reorganization is completed, the new layout is swapped with the existing layout with minimal impact on the query performance.


\subsection{\revision{Assumptions and Limitations}}
\label{sec:limitations}
\revision{
We consider dynamically optimizing data layouts for a static dataset. For streaming data that is ingested continuously, reorganizing the entire dataset with each new data point arrival is not practical. Instead, we could batch newly arrived data and reorganize them separately from the already ingested data. This approach resembles existing layout optimzation practices that allow users to incrementally change clustering keys by only applying the change on newly ingested data~\cite{liquid}.

In distributed settings, we assume that each node can independently reorganize its local data in response to queries directed at it. For vertically partitioned data, it makes sense to handle frequently accessed partitions separately from less accessed ones, as the former would likely benefit more from dynamic reorganization. In horizontal partitioning, dynamically changing the sharding key after the system is in operation could be challenging~\cite{bestpractice}, which motivates the setup where each shard independently manages its reorganization. However, dynamically changing the sharding key is beyond the scope of our paper, as it entails additional considerations such as load balancing that is not covered by the D-UMTS framework.

We focus on scenarios where the time taken for reorganization is relatively short compared to the rate at which query patterns change. For example, we have observed that in production query workloads at VMware's internal data platform SuperCollider, the query patterns remain stable over short periods (e.g., days) but can shift over longer spans (e.g., months). Rapidly changing query patterns could render new data layouts outdated by the time reorganization is complete. In practice, systems can also mitigate this risk by dynamically allocating resources to ensure reorganization finishes efficiently~\cite{snowauto}. 

 }
\section{Dynamic Reorganization via MTS}
\label{sec:reorg}
The first component of \sysname is the MTS-based \reorg. The \reorg observes the query stream and decides when the system should switch to a different data layout and if so, which layout to switch to. On a high level, the \reorg extends the classic algorithm of Borodin, Linial, and Saks~\cite{BLS92} to allow arbitrary modification of the state space during query processing.  

\subsection{Overview of the Algorithm of Borodin, Linial, and Saks}
We first present an overview of the classic algorithm~\cite{BLS92} (Algorithms~\ref{alg:process-queries}-\ref{alg:update-counters}). The algorithm takes as input a query stream $\cQ$ and a fixed set of states $\cS$, and returns a set of states $\cH$, one per query, that the system should be in \emph{before} processing the query. 

\begin{algorithm}[ht!]
\footnotesize
\caption{$\textbf{ProcessQueries}(\cQ, \cS)$ \label{alg:process-queries}}
\begin{algorithmic}[1]
\Statex{Processes queries in ordered set $\cQ$ with states in $\cS$.}
\Statex{$\cC$: counters, $\cS_A$: active states}
\State{$\cS_A, \cC \gets \textrm{ResetStates}(\cS)$} 
\State{$s_c \sim \textrm{Uniform}(\cS_A)$} \Comment{Randomly select a current state}
\State{$\cH \gets \{s_c\}$} 
\For{$q \in \cQ$}
    \State{$s_c, \cS_A, \cS, \cC \gets \textrm{UpdateCounters}(q, s_c, \cS_A, \cS, \cC)$}
    \State{$\cH \gets \cH \cup \{s_c\}$}
\EndFor
\State \Return $\cH$
\end{algorithmic}
\end{algorithm}

\begin{algorithm}[ht!]
\footnotesize
\caption{$\textbf{ResetStates}(\cS)$ \label{alg:reset-states}}
\begin{algorithmic}[1]
\Statex{Creates new active states $\cS_{A}$ and resets all counters.}
\State{$\cS_A \gets \cS$}
\State{$\cC(s) \gets 0$ for $s \in \cS_A$}
\State \Return $\cS_A, \cC$
\end{algorithmic}
\end{algorithm}

\begin{algorithm}[ht!]
\footnotesize
\caption{$\textbf{UpdateCounters}(q, s_c, \cS_A, \cS, \cC)$ \label{alg:update-counters}}
\begin{algorithmic}[1]
\Statex{Updates the counters of states after processing query $q$ with current state $s_c$.}

\State{$\cC(s) \gets \cC(s) + c(s, q)$ for $s \in \cS_A$ \label{algline:counter-increase}} \Comment{Update counters}
\State{$S_A \gets \{s\,\mid\,\cC(s) < \alpha \textrm{ for }s \in \cS_A\}$}  
\If{$s_c \not\in \cS_A$} \Comment{If current state counter is full}
    \If{$\cS_A = \emptyset$} \Comment{All counters are full}
        \State{$\cS_A, \cC \gets \textrm{ResetStates}(\cS)$} \Comment{Start new phase}
    \EndIf
    \State{$s_c \gets \mathrm{Uniform}(\cS_A)$} \Comment{Switch to new state} 
\EndIf
\State \Return $s_c$, $\cS_A, \cS, \cC$
\end{algorithmic}
\end{algorithm}

\Cref{alg:process-queries} initializes a set of ``counters" starting at 0 for each state in $\cS$. For each new query $q$ that is processed, each counter increases by the cost $c(s, q)$ of serving $q$ in state $s$. A counter is considered ``full" when the cumulative cost is at least $\alpha$. When the counter for the current state is full, the system randomly switches to another state whose counter is not full with uniform probability. When there are no more states to switch to (all counters are full), the algorithm ``resets" all counters back down to 0 (\Cref{alg:reset-states}). The periods between the state resets are called ``phases" of the algorithm. As shown in~\cite{BLS92}, the algorithm has a competitive ratio of $O(\log|\cS|)$.

Intuitively, the counters measure the cost a state would have incurred if it processed all the queries in a phase. All the counters are at least $\alpha$ after a phase, so the optimal algorithm is guaranteed to incur a cost of at least $\alpha$ (had it stayed in a state during the entire phase). This is the key to bounding the performance of the optimal algorithm during the analysis.

Applying this algorithm to our layout optimization problem has the following implications. The states are different data layouts, and the service cost is the fraction of data accessed from running the query on the current data layout, which can be estimated using a small amount of partition-level metadata. Given that the relative overhead of reorganization versus querying is $\alpha$, each phase ends when all counters are above $\alpha$. Note that the algorithm's behavior is influenced by $\alpha$ -- higher values of $\alpha$ generally result in fewer reconfiguration moves, while lower values of $\alpha$ will result in more reconfiguration moves. The effect of $\alpha$ is evaluated in Section~\ref{sec:evalparams}.

In the original algorithm, each phase begins with transitioning to a random state. This makes the analysis of the algorithm simpler and cleaner. To reduce reorganization costs, a simple but effective optimization is simply allowing the algorithm to stay in the current state when starting a new phase instead of forcing it to move to a random state, thus saving on this initial random transition cost. 
Since each phase is independent, this optimization does not asymptotically affect the competitive ratio of the randomized algorithm. Empirically, we have observed that having the option to stay in the current state significantly improves the reorganization cost. 

However, one difficulty lies in how to initialize the state space $\cS$. Although the algorithm assumes that we can work with a fixed set of states throughout query processing, in practice we do not know this set apriori. Without knowledge of the query workload, what we would like to do is to start with a default data layout, such as Z-ordering on one or more predefined sort columns, and as the systems observe more queries, compute better-performing, workload-aware layouts to add to the state space $\cS$ incrementally. The next section describes how we have modified this classic algorithm to permit such modification of the state space $\cS$ during query processing.

\subsection{Supporting Dynamic State Spaces}
\label{sec:dss}
We model the modification of the state space via state update queries, which add and remove any state in the state space during the processing of service queries. The flexibility of removing states allows for controlling the size of the state space in our applications (e.g., to bound the amount of metadata maintained by the algorithm). Algorithm~\ref{alg:update-states} summarizes our proposed modification to handle state update queries in the framework. If a new state is added in the middle of the phase, we simply defer the new state to the next phase. In other words, the algorithm behaves as if no additions have happened in the current phase, and resets the active state space to include any new states when the next phase starts. If an existing state is deleted in the middle of a phase, we set the counter of this state to $\alpha$ to mark that the algorithm can no longer switch to this state in the current phase. If all remaining counters are full after removing this state, we simply reset all counters to 0 and start a new phase with the updated state set. If the state that our system is currently in gets deleted, we follow the same procedure as if the current state is full, and randomly switch to another state that is still available.

Since the original guarantee of an $O(\log|\cS|)$ competitive ratio is no longer applicable in the dynamic setting, one open question is how well do our proposed modifications work for \dumts. The following analysis shows that our algorithm costs no more than $O(\log |\cS_{max}|)$ in a phase, where $|\cS_{max}|$ is the maximum size of the active state space over the course of the query stream. Moreover, this competitive ratio is asymptotically optimal. In particular, we show the following:

\begin{theorem}
\label{thm:apx-guarantee}
\Cref{alg:update-states} solves \dumts with competitive ratio $2H(|S_{max}|) \leq 2(1+\log |\cS_{max}|)$ where $H(n)$ is the $n$-th harmonic number and $\cS_{max}$ is the largest set of states created by the update queries over the course of the stream.
\end{theorem}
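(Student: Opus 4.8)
The plan is to mirror the phase-based analysis of Borodin--Linial--Saks~\cite{BLS92}, modified to cope with a state space that changes during the run. I would first cut the query stream into the \emph{phases} delimited by the counter resets performed by \Cref{alg:update-states}, including the resets that are forced when a deletion empties the active set. I then prove two per-phase bounds against an oblivious adversary (\S~\ref{sec:problem}): a lower bound of $\alpha$ on the cost of any offline solution for each \emph{complete} phase, and an upper bound of $2\alpha H(|\cS_{max}|)$ on the expected cost of our algorithm for each phase. Since the number of phases is at most one more than the number of complete phases, summing these bounds gives $E[\text{online}] \le 2H(|\cS_{max}|)\cdot\text{OPT} + 2\alpha H(|\cS_{max}|)$, i.e.\ competitive ratio $2H(|\cS_{max}|)\le 2(1+\log|\cS_{max}|)$ up to an additive constant.

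\emph{Offline lower bound.} Here I would use the semantics of the counters: at every moment $\cC(s)$ equals exactly the service cost state $s$ would have paid on those queries of the current phase for which $s$ stayed in the active set, and a phase completes exactly when every state that was active throughout it has counter at least $\alpha$. Within a complete phase the offline solution falls into one of two cases. Either it remains in a single state for the entire phase --- which, since the adversary may only use states currently available to the algorithm, must then have been in the active set throughout, so its final counter, and hence the offline service cost, is at least $\alpha$; or at some point during the phase it reorganizes, or it sits in a state the adversary deletes and must therefore move out of, and pays at least $\alpha$ for that move. Either way it pays at least $\alpha$ per complete phase. This argument is robust to the dynamic behaviors of \Cref{alg:update-states}: deferring additions to the next phase lowers no counter, and a deletion-triggered reset still leaves all surviving counters at least $\alpha$.

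\emph{Online upper bound.} Split the algorithm's per-phase cost into movement and service. Because new states are deferred, the active set $\cS_A$ starts a phase with at most $|\cS_{max}|$ states and only shrinks, as counters fill or states are deleted. The crux of the argument --- and the step I expect to be the \textbf{main obstacle} --- is re-establishing the symmetry invariant of \cite{BLS92} in this dynamic setting: once the first forced move of a phase has occurred, the current state is uniformly distributed over $\cS_A$. I would prove this by induction over the events that shrink $\cS_A$ (a counter filling, or an adversarial deletion), the key point being that the adversary is oblivious, so the identity of the next state to be removed is independent of the algorithm's coin flips; conditioning a state that is uniform on $\cS_A$ on the event that it is not the removed one keeps it uniform on the smaller set, while deleting the current state re-randomizes it uniformly over what remains. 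Given the invariant, each time $\cS_A$ drops from $k$ to $k-1$ states the current state is the removed one with probability at most $1/k$, so the expected number of moves in a phase is at most $\sum_{k=2}^{|\cS_{max}|}\frac1k \le H(|\cS_{max}|)$, contributing expected movement cost at most $\alpha H(|\cS_{max}|)$. For service, each maximal block of consecutive queries the algorithm spends in one state raises that state's counter from below $\alpha$ up to at least $\alpha$, hence contributes less than $\alpha$ (treating the single-query granularity $c(\cdot,\cdot)\le 1$ as a lower-order term); since the number of blocks is one more than the number of moves, the expected service cost in a phase is also at most $\alpha H(|\cS_{max}|)$ up to lower-order terms. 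Adding the two bounds gives expected per-phase cost at most $2\alpha H(|\cS_{max}|)$, and dividing by the offline per-phase lower bound of $\alpha$ gives the stated ratio.
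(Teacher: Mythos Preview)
Your proposal is correct and follows essentially the same phase-based argument as the paper: lower-bound OPT by $\alpha$ per completed phase via the counter semantics (with the same case split on whether the offline solution moves or stays), and upper-bound the algorithm's expected per-phase cost by $2\alpha H(|\cS_{max}|)$. The only cosmetic difference is that the paper packages the upper bound as the recursion $f(k)\le 2\alpha+\tfrac{1}{k}\sum_{i=1}^{k}f(k-i)=2\alpha/k+f(k-1)$, which unwinds to $2\alpha H(k)$, whereas you argue it directly via the uniform-over-$\cS_A$ invariant and an expected-moves count; the two presentations are equivalent.
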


\begin{proof}
As discussed above, we break the execution of the algorithm into ``phases", which are the execution intervals of the algorithm between calls to ResetStates (\Cref{alg:reset-states}). In other words, a phase begins when all counters in $\cS_A$ are set to 0, and ends when $\cS_A$ is empty. During a phase, the total set of states $\cS$ may change as states are added or deleted. The counters have a natural interpretation: they are the lower bound of the cost a state would have incurred had an algorithm chosen to stay in that state during the entirety of the phase.

Let $\cS_f$ and $\cS_i$ be the set of states in $\cS$ at the beginning and end of a phase. Let $\cA_{opt}$ denote the optimal offline algorithm, and consider its cost in a phase. There are two cases, either $\cA_{opt}$ moves to a newly added state in $\cS_f \setminus \cS_i$, or it remains in one of the original states in $\cS_i$. In the former case, $\cA_{opt}$ incurs movement cost at least $\alpha$. In the latter case, every state in $\cS_i$ has a counter at least $\alpha$, which implies that $\cA_{opt}$ incurs total cost at least $\alpha$.

On the other hand, we show that the cost of our algorithm is bounded by $2H(|S_i|)$. Let $f(k)$ be the cost of our algorithm when $|\cS_A| = k$ and note that $f(0) = 0$. We say a counter for state $s$ has ``reached" $\alpha$ if $\cC(s)$ exceeds $\alpha$ at the end of line~\ref{algline:counter-increase} in \Cref{alg:update-counters}. Let $s_i$ be the $i$-th state in $\cS_A$ whose counter reaches $\alpha$, breaking ties arbitrarily. Note that being in state $s_i$ means at most $|\cS_A|-i$ states remain when we transition out of $s_i$. With equal probability, we transition to any state in $\cS_A$, so $$f(k) \leq 2\alpha + \sum_{i=1}^k f(k-i)/k = 2\alpha/k + f(k-1) = 2 \alpha H(k),$$ where the factor of $2$ accounts for the cost incurred by our current state serving queries (filling its counter from 0 to $\alpha$) and from transitioning to a new state (movement cost $\alpha$). The theorem follows by applying the above argument to every phase of the algorithm and noting that $H(|S_i|) \leq \log (|S_i|) + 1$.
\end{proof}

\begin{remark}
\Cref{alg:update-states} is also asymptotically optimal in the competitive ratio, as the competitive ratio for the non-dynamic uniform metrical task system is lower bounded by $\log(|S_{max}|)$~\cite{BLS92}.
\end{remark}

\begin{algorithm}[t]
\footnotesize
\caption{$\textbf{ProcessQueries*}(\cQ, \cS)$ \label{alg:update-states}}
\begin{algorithmic}[1]
\Statex{Processes queries in ordered set $\cQ$ with states in $\cS$.}
\State{$\cS_A, \cC \gets \textrm{ResetStates}(\cS)$} 
\State{$s_c \sim \textrm{Uniform}(\cS_A)$} \Comment{$s_c$: current state}
\State{$\cH \gets \{s_c\}$} 
\For{$q \in \cQ$}
    \If{$q$ removes state $s_d$} \Comment{Remove state}
        \State{$\cS_A \gets \cS_A \setminus \{s_d\}$}
        \State{$\cC(s_d) \gets \alpha$}
        \If{$\cS_A = \emptyset$} \Comment{No state remains}
            \State{$\cS_A, \cC \gets \textrm{ResetStates}(\cS)$}
        \EndIf
        \If{$s_d = s_c$} \Comment{If current state is deleted}
            \State{$s_c \sim \textrm{Uniform}(\cS_A)$} \Comment{Switch to a random state}
        \EndIf
    \ElsIf{$q$ adds state $s_a$} \Comment{Add state}
        \State{$\cS \gets \cS \cup \{s_a\}$}
    \Else \Comment{Service query}
    \State{$s_c, \cS_A, \cS, \cC \gets \textrm{UpdateCounters}(q, s_c, \cS_A, \cS, \cC)$}
    \State{$\cH \gets \cH \cup \{s_c\}$}
    \EndIf
\EndFor
\State \Return $\cH$
\end{algorithmic}
\end{algorithm}

\subsection{Improving Reorganization with a Predictor}
\label{sec:learned}
So far, our proposed algorithm makes no distinction between states and switches between states whose counters are not full randomly with uniform probability. In practice, users may have additional information on which states have better performance. In this section, we describe how to leverage such additional information to improve the performance of the algorithm. 

We assume there exists a predictor $p(s, \cS_A)$ which outputs transition scores for choosing the next state $s \in \cS_A$. A higher transition score indicates a state that is predicted to be more efficient for the upcoming workload. A user can then use this predictor to construct a transition distribution where for each active state $s \in \cS_A$, we jump to a state $s$ with probability $\frac{p(s, \cS_A)}{\sum_{s \in \cS_A} p(s, \cS_A)}$. In our algorithm, we would like to jump to the state in $\cS_A$ that is the most efficient in its phase. Rank the states of $S_A$ by their efficiency (each state $s$ gets a rank from $\{1,\ldots, |S_A|\}$, with 1 being the most efficient). When we choose uniformly at random, we get a state with $|S_A| / 2$ on average. The hope is that the predictor $p$ biases our jump towards the states with ranks closer to 1. More formally, the predictor $p$ helps us construct a discrete distribution $\cD_{|\cS_A|}$ on $\left\{1,\ldots, |\cS_A|\right\}$, where making a jump corresponds to sampling from $\cD_{|\cS_A|}$.

Concretely, in our application,  we can use the query costs incurred by states in the previous phase as the predictor for their performances in the next phase. For states that are added in the middle of a phase, we can replay the queries processed in the current phase so far to fill in the counter, or simply initialize the counter to be the median of query costs incurred so far by existing states in the phase. When the counter for the current state is full, the algorithm can pick a new state based on a distribution that favors the better-performing states, instead of choosing a new state to switch to uniformly at random. For example, we can assign each state $s$ with a weight $w_s$ proportional to the average fraction of data skipped in the last phase, and switch to a new state $s$ with probability $\frac{w_s^\gamma}{\sum_{s\in \cS_A} w_s^\gamma}$. For $\gamma=0$, this is equivalent to choosing the next state under a uniform distribution. For $\gamma>0$, the probability distribution favors states that have larger weights, or better performances in the last phase. In fact, we can show that the competitive ratio is directly influenced by the accuracy of the distribution in predicting the ``optimal state" over a single phase.

\begin{theorem}
\label{thm:mladvice}
Fix any phase of \Cref{alg:process-queries}, and let $\cS$ be the initial states at the beginning of the phase. Order the states of $\cS = \{s_1, s_2, \ldots, s_n\}$ in order of the time their counters reach $\alpha$. Let $\cD_m$ be distributions $\{1, 2, \ldots, m\}$ such that $\E[\cD_m] / m \geq \beta$ for all $m=1,2,\ldots,n$.\footnote{I.e., our predictor is expected to give something within the top $\beta$ fraction of ranks.} Then the expected competitive ratio of \Cref{alg:process-queries} is at most $O\left(\log_{(1-\beta)^{-1}} n\right)$.
\end{theorem}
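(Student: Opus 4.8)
The plan is to piggyback on the phase decomposition used in the proof of \Cref{thm:apx-guarantee}. As there, within any phase every one of the $n=|\cS|$ counters reaches $\alpha$ before the phase ends, so the offline optimum pays at least $\alpha$ in that phase (it either stays in a single state, whose accumulated service cost is then $\ge\alpha$, or it moves and pays the movement cost $\alpha$). Summing over phases and dividing, it suffices to show that the \emph{expected} cost our algorithm incurs inside a single phase is $O\!\big(\alpha\,\log_{(1-\beta)^{-1}} n\big)$.

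For that I would track $m=|\cS_A|$, the number of active states, measured just after each visited state's counter fills. The key structural observation is that, because every counter is advanced on \emph{every} query irrespective of the current state, the active set is always a suffix $\{s_{k+1},\dots,s_n\}$ of the fill order $s_1,\dots,s_n$. Hence leaving a full state with $m$ active states remaining and jumping to the one of rank $\ell\sim\cD_m$ causes exactly $\ell$ of those $m$ states to fill before the next jump is forced, so the active set contracts from $m$ to $m-\ell$. Charging at most $2\alpha$ to each visited state (at most $\alpha$ of residual service to top up its counter, plus at most $\alpha$ of movement to reach it) bounds the expected phase cost by $2\alpha\,T(n)$, where $T(0)=0$ and $T(m)=1+\E_{\ell\sim\cD_m}[T(m-\ell)]$. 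The hypothesis $\E[\cD_m]\ge\beta m$ says exactly that $\E[m-\ell]\le(1-\beta)m$: the candidate pool shrinks by a factor $(1-\beta)$ in expectation at every jump, which heuristically gives $\log_{(1-\beta)^{-1}} n$ jumps.

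To turn this into a bound I would show, by induction on $m$, that $T(m)\le a\,(1+\log_b m)$ for $m\ge1$, with $b=(1-\beta)^{-1}$ and $a=O(1)$ (for $\beta\ge\tfrac12$ one can take $a\le 3$); the base case $m=1$ is immediate since $\cD_1$ is concentrated on $\{1\}$. The step I expect to be the main obstacle is the boundary of the recursion: expanding, one must bound $\E[\log_b(m-\ell)]$ given $\E[m-\ell]\le(1-\beta)m$, and a careless use of Jensen replaces $\log_b(m-\ell)$ by $\log_b(m-\ell+1)$ to dodge $\log 0$ — but that stray $+1$ inside the logarithm, while harmless near $\beta=\tfrac12$, costs a full additive constant per jump as $\beta\to1$ and would only give an $O(\log_2 n)$-type bound there. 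The remedy is to split off the event $\ell=m$ (where the recursion genuinely bottoms out at $T(0)=0$) and apply Jensen only to the conditional law given $\ell\le m-1$; the price is an entropy-type term $q\log_b(1/q)\le\tfrac{1}{e\ln b}$ with $q=\Pr[\ell\le m-1]$, which is absorbed into $a$ and leaves the factor $(1-\beta)$ \emph{inside} the logarithm, so that $\log_b((1-\beta)m)=\log_b m-1$ cancels the $+1$ from the extra visit. One then checks the induction closes for any $a\ge\frac{e\ln b}{e\ln b-1}$.

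The remaining ingredients are routine: the per-visit bound of $2\alpha$, the monotonicity/concavity facts needed to invoke Jensen (and the remark that replacing the true post-jump pool size, which may be even smaller when several counters fill on the same query, by $m-\ell$ only inflates $T$), and the bookkeeping that the first pick of each phase contributes at most one extra visit. Combining $2\alpha\,T(n)$ per phase against the $\alpha$-per-phase lower bound for the optimum gives the competitive ratio $2\,T(n)=O\!\big(1+\log_{(1-\beta)^{-1}} n\big)$, as claimed.
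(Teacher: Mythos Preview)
Your plan is correct and would go through, but the paper takes a shorter, cruder route than your careful recursion-plus-Jensen argument. Instead of setting up $T(m)=1+\E_{\ell\sim\cD_m}[T(m-\ell)]$ and closing an induction with the $\ell=m$ split you describe, the paper applies Markov's inequality directly to the one-step shrinkage: from $\E[|\cS_{i+1}|]\le(1-\beta)|\cS_i|$ it gets $\Pr\big[|\cS_{i+1}|\le 2(1-\beta)|\cS_i|\big]\ge\tfrac12$, so each transition is a ``coin flip'' that with probability $\ge\tfrac12$ cuts the active set by a fixed factor $2(1-\beta)$. A standard geometric-trials argument then bounds the expected number of transitions by twice the number of successful shrinks needed to drive $n$ down to $0$, namely $O(\log_{(2(1-\beta))^{-1}} n)=O(\log_{(1-\beta)^{-1}} n)$.

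What each approach buys: the paper's Markov trick sidesteps entirely the boundary case $\ell=m$ and the entropy-term bookkeeping you flagged, at the price of the factor $2$ inside the shrinkage base---so its implicit constant is finite only for $\beta>\tfrac12$. Your Jensen-based induction is more work but gives an explicit constant $a=\frac{e\ln b}{e\ln b-1}$ that stays bounded down to $\beta>1-e^{-1/e}\approx0.31$ and keeps the base $(1-\beta)^{-1}$ rather than $(2(1-\beta))^{-1}$, so it degrades more gracefully near $\beta=\tfrac12$ and is sharper as $\beta\to1$.
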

\begin{proof}
Recall that in the analysis of \Cref{alg:process-queries}, random transitions are made whenever a counter reaches $\alpha$, and that the competitive ratio is proportional to the number of transitions made. Let $\cS_i$ be the active states left on the $i$-th transition. To bound the number of transitions, let $X_i$ be the indicator random variable for the event that $|\cS_{i+1}|\leq 2(1-\beta) |\cS_i|$, where transition $i$ is made according to distribution $\cD_i$. By Markov's inequality, $\textrm{Pr}[X_i] \geq 1/2$. To reduce $n$ states down to 0, we need $O\left(\log_{(1-\beta)^{-1}} n\right)$ of the indicators to be 1. Since each indicator has a positive constant probability of occurring, this implies that we get $O\left(\log_{(1-\beta)^{-1}} n\right)$ transitions in expectation.
\end{proof}
In other words, the competitive ratio of our algorithm improves if a predictor gives distributions that are biased towards the most efficient states. We evaluate the impact of the transition distribution on the overall performance of the algorithm in Section~\ref{sec:evalparams}.
\section{On-the-fly Layout Generation}
\label{sec:layout}
The second component of \sysname is the \layout. As discussed earlier, without knowledge of the query workload, 
the system needs to generate new data layouts incrementally as it observes more queries.
In this section, we present the design of the \layout that determines 1) which new states to compute and 2) whether to admit these new states into the dynamic state space.

\subsection{Generating Diverse Data Layouts}
\label{sec:sw}
Workload-aware data layouts can experience performance degradation under changing query workloads. Therefore, it is necessary to periodically update data layouts to keep up with the changes. There are a few common strategies for performing this update. 

The first strategy is to periodically compute new data layouts based on all queries that have arrived so far. However, this strategy is expensive as the query history keeps growing over time. It is also ineffective since recent queries can get out weighted by the larger volumes of historical queries. The second strategy is to keep a sliding window (of a fixed size or of a fixed time interval) of recent queries and periodically compute new layouts based on queries in the sliding window. The sliding window is bounded in size and responsive to changes, but has no memory of the distant past. The third strategy is to keep a reservoir sample of queries, which biases towards recent events but also keeps memories from the past. Reservoir samples provide a holistic picture of the entire history with a limited memory budget. However, since reservoir sampling always keeps around a small portion of old queries, layouts generated from the reservoir sample tend to perform slightly worse on the current workload compared to layouts generated from the sliding window. 

We experimented with the use of both sliding windows and reservoir sampling for generating data layout candidates. We empirically found that using layouts generated solely from sliding windows gives the best overall performance (\S~\ref{sec:evalparams}). One explanation is that, since the cost of switching between layouts is constant, it is more beneficial for the online algorithm to switch between layouts that have good performance for a specific workload, rather than layouts that have mediocre performance for multiple workloads. For example, consider a workload that iterates through each column of the dataset and generates 100 random range queries per column. Since reservoir sampling incorporates historical data, the sampled workload would always include queries on multiple columns. As a result, the \layout is unable to generate the "optimal" layouts that partition according to a single column.



\begin{algorithm}[t]
\footnotesize
\caption{$\textbf{Layout Management}$}
\label{alg:layout-candidates}
\begin{algorithmic}[1]
\Statex{$\cS_A$: dynamic state space, $s$: new state}
\Statex{$Q$: sample of past queries, $\epsilon$: distance threshold}
\Procedure{admit\_state}{$\cS_A, s, Q$}
\State{$c =$ \textsc{eval\_skipped}($s, Q$)}
\State{$dists \gets []$}
  \For {$s_i \in \cS_A$}
  \State{$c_i =$ \textsc{eval\_skipped}($s_i, Q$)}
  \State{$dists.\text{add}\left(\frac{||c- c_i||_1}{\text{dim(c)}}\right)$}
  \EndFor
 \State \Return $min(dists) > \epsilon$
\EndProcedure
\end{algorithmic}
\end{algorithm}

\subsection{Expanding the Dynamic State Space}
Now that new data layouts are constantly being generated, the \layout needs to decide how to update the state space accordingly. One option would be to simply admit all newly generated states. However the competitive ratio of the algorithm is directly related to the size of the dynamic state space. On the other hand, if we do not expand the state space at all, the system might miss out on new data layouts that are better performing for current and future workloads. Therefore, the \layout needs a policy that can determine whether to admit a new data layout into the dynamic state space. 

We draw inspiration from prior work in online learning that tries to improve performance guarantees by reducing the size of state space~\cite{cohen2017online}. For example, one can find a small number of states that form a covering of the space such that at least one of them is close to the optimal state for the system. Using similar reasoning, admitting highly similar data layouts to the dynamic state space does not help much with the query performance. Furthermore, it can incur additional reorganization costs as the reorganization might end up switching back and forth between similar layouts. 

In our setup, we consider two data layouts to be similar if they incur similar query costs over the query stream. Specifically, we use a reservoir-based time-biased sampling (R-TBS) algorithm proposed in~\cite{hentschel2019general} to curate a representative query sample of size $s$ over the query stream. We evaluate each data layout candidate $i$ on the sample to get a cost vector $c_i = (c_{i,1}, ..., c_{i,n})$, where $c_{i,n}$ is the cost of executing query sample $n$ on data layout $i$. We define the difference between two data layouts $i$ and $j$ to be a distance function (e.g., normalized L1 distance) of the query cost vectors $c_i$ and $c_j$. The \layout is configured with a distance threshold $\epsilon \in [0, 1]$ and only admits a new data layout if it is at least $\epsilon$ distance apart from all existing layouts in the state space. As the size of the state space becomes larger, it is increasingly difficult for a new data layout to get admitted, since it needs to be different enough from {\em all} existing states in the space. The pseudo-code for the procedure is presented in Algorithm~\ref{alg:layout-candidates}.

Since the query samples are updated as the system processes more queries, the similarity between data layouts measured on these query samples can also change over time. For instance, two data layouts that initially had a distance of $> \epsilon$ when admitted to the state space may have a distance of $< \epsilon$ under query samples at a later time. This is acceptable because at the time when the \layout decides to admit the new state, switching to this state would make a difference in query performance. The \layout can also periodically prune the state space based on recent query samples and remove candidates that incur similar query costs to other layouts. We investigate the impact of the distance threshold $\epsilon$ on the size of the dynamic state space and the performance of our algorithm in Section~\ref{sec:evalparams}. We also compare the dynamic state space to a fixed state space that is precomputed based on knowledge of the query workload in Section~\ref{sec:oracle}.

\section{Evaluation}
\label{sec:eval}

In this section, we empirically evaluate \sysname on a variety of datasets and workloads. Overall, the experiments show that 
\begin{itemize}
    \item Compared to using a single data layout, dynamically switching layouts using \sysname results in up to 32\% improvement in end-to-end runtime on various datasets and workloads.
    \item \sysname remains competitive and often outperforms alternative online reorganization strategies. 
    \item \sysname works across different data layout generation mechanisms and is robust to changes in key parameters.
\end{itemize}

\subsection{Setup}
\label{sec:eval-setup}
\subsubsection{Implementation} \sysname is implemented as a lightweight Python library. \sysname keeps track of different data layouts via partition-level metadata. With information such as row count, range of values (or distinct values for categorical columns) for each column in the partition, \sysname is able to estimate query costs incurred by different layouts without accessing the underlying dataset. 

We experimented with two data layout optimization techniques in the evaluation: Z-ordering and Qd-tree. First, we use Z-ordering on user-defined columns to split the dataset into equal-sized partitions. To make Z-ordering workload-aware, we use the top three most queried columns in the sliding window, which can change over the course of the query stream. Second, we construct data layouts using Qd-trees, a workload-aware layout optimization technique. Our implementation of the Qd-tree uses the greedy construction algorithm and does not include any advanced cuts. Similar to practices in prior work, we construct Qd-trees based on a 0.1\% to 1\% sample of the dataset. 

We perform evaluations on a VM with 64GB of RAM and 8 Intel(R) Xeon(R) Gold 6230R 2.10GHz CPUs. For methods that use variations of the classic MTS algorithm, we report the average from three runs. We report results using two metrics: physical runtimes in end-to-end experiments and logical costs in simulation.

\minihead{End-to-end}
We report end-to-end query execution time and reorganization time from a shallow integration with Apache Spark, \revision{similar to setups in prior works~\cite{ding2021instance,sudhir2023pando}. } Specifically, we include a new column for each table that specifies the unique partition ID (BID) that each row maps to. During query processing, we use the partition-level metadata to calculate a list of BIDs that the query needs to read. We then rewrite the original query to include an additional predicate that filters by the list of computed BIDs. For example, if the metadata indicates that only partitions 6 and 10 are relevant for the query, we write the query predicate to include an explicit partition filter \texttt{BID IN (6, 10)}.   During reorganization, we update the BID column according to mappings generated from Qd-tree/Z-ordering. We rewrite rows with the same BID into a new partition, which is stored as a Parquet file on local disks. We exclude the time taken to compute data layouts since it is dependent on the specific layout optimization technique and is not the focus of this work. In general, data layouts can be computed rather efficiently on large datasets~\cite{armbrust2020delta,yang2020qd}. We run Spark in stand-alone mode estimate the total query time using a sample of 2000 queries (around 10\% of the workload). 

\minihead{Simulation} We also report results using logical costs in simulation. Per our cost model, we use the fraction of data records accessed (between 0 and 1) as a proxy for the query costs and assume the reorganization cost to be $\alpha$. We investigate the effect of varying key parameters on the overall performance of the framework as well as the impact of our proposed optimizations in simulation.

\begin{figure*}[ht]
    \centering
    \includegraphics[width=0.9\linewidth]{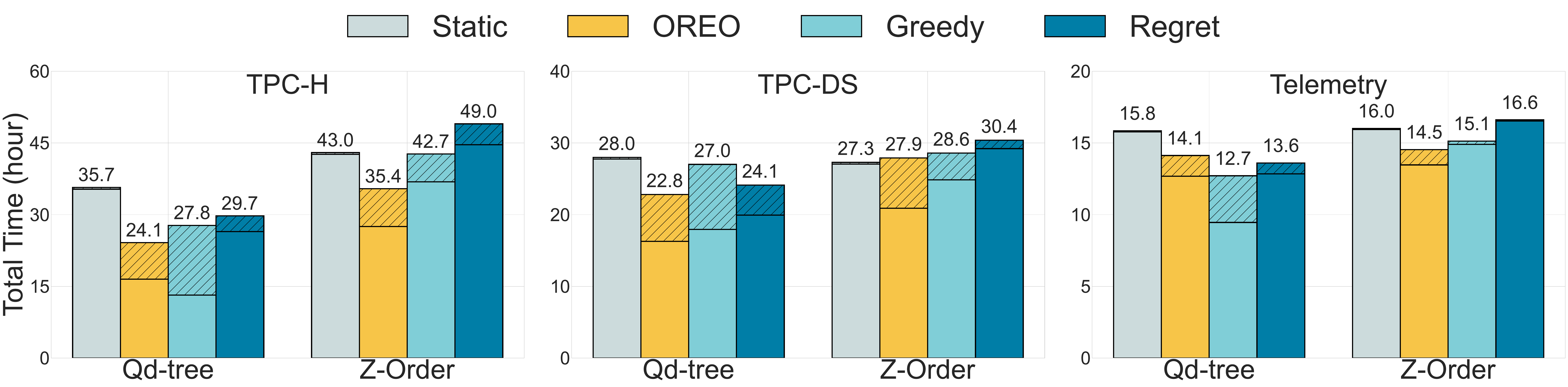}
    \caption{Comparison of total query and reorganization time in Spark enabled by \sysname with baselines. The top half of each bar with the hatches represents reorganization time, while the bottom half represents query time. Overall, dynamic reorganization improves upon a static, optimized layout by up to 32\% in total compute time.  }
    \label{fig:spark}
    \vspace{-1em}
\end{figure*} 

\subsubsection{Dataset and Query Workloads} We perform evaluation on three real-world datasets: TPC-H, TPC-DS and a production workload from an internal data lake system at VMware (Telemetry). We set the target partition count such that each partition contains between 1 million to 2 millions rows, and the average size of the partitions (Parquet files) is between 100 to 200 megabytes, consistent with recommended best practices. 

\minihead{TPC-H} We use the TPC-H data generator with a scale factor of 100 and denormalize all tables against the lineitems table. Due to the uniform nature of TPC-H data, we pre-divide the dataset into 10 equally sized partitions according to its primary keys and perform reorganization on one of the partitions, which contains around 40 millions rows, 58 columns. Similar to prior work~\cite{yang2020qd,sun2014fine}, we include 13 TPC-H query templates that touch the lineitem fact table (q1, q3, q4, q5, q6, q7, q8, q10, q12, q14, q17, q21\footnote{q9 and q18 are excluded because they involve predicates that can not be directly evaluated using basic partition-level metadata. Specifically, q9 includes \textsc{LIKE} operator on a high cardinality column \textsc{P\_NAME} and q18 includes a filter on the aggregate value of a group by.}). The workload generator behaves like a state machine and samples queries from one query template for an arbitrary amount of time before switching to another random query template. The workload contains a total of 30,000 queries, generated from 20 query templates. 

\minihead{TPC-DS} We use the TPC-DS data generator with a scale factor of 10 and denormalize all tables against the \texttt{store\_sales} table. The resulting table has around 26 million rows. We include 17 TPC-DS query templates that involve queries on the \texttt{store\_sales} fact table and the dimension tables~\footnote{We used q3, q7, q13, q19, q27, q28, q34, q36, q46, q48, q53, q68, q79, q88, q89, q96, q98.}. The workload includes 30,000 total queries, generated from 20 query templates using the same method as the TPC-H workload. 

\minihead{Telemetry} \anonsys is an internal data platform that is used by over a hundred teams for data analytics in VMware. \anonsys allows users to set up data jobs for bulk ingestion, run SQL queries over the data, discover and share datasets in a market space. We investigate a production table in \anonsys that logs monitoring information for ingestion jobs. We extract a sample of the table as well as queries that touch this table in the past six months. In total, our samples include 24,000 queries and around 30 million rows for the table. The most popular predicates include range queries on the arrival time of the record, where the time interval ranges from a few hours to a few months, as well as filters on the name of the collector who has sent the data. 

\subsubsection{Methods of Comparison} \label{subsec:comp_methods}
We compare \sysname against one offline baseline (Static) which observes the entire query workload in advance but is not allowed to change states, as well as two online baselines (Greedy and Regret) that do not have workload knowledge but can change states. \revision{The three online approaches (Greedy, Regret and \sysname) utilize the same set of data layout candidates computed periodically based on a sliding window of recent queries, but use different reorganization strategies.} By default, the relative reorganization cost $\alpha$ is set to 80 based on measurements obtained on our system setup, and the sliding window is set to include the most recent 200 queries.

\minihead{Static} The method observes the entire query workload in advance and constructs a single layout that optimizes data skipping for the entire workload. 

\minihead{\revision{Greedy}} The method compares the performance of the current data layout with a new data layout computed based on a sliding window of recent queries, \revision{ and greedily switches to the new layout if it has a smaller query cost than the current one, without considering the reorganzation cost.}

\minihead{Regret} \revision{This method is similar to the Greedy strategy but considers the reorganization cost,} inspired by work on storage management in video analytics~\cite{daum2021tasm}. The method keeps track of the cumulative difference in query costs between the current data layout and alternative layouts over the query history. For each new layout, the method retroactively computes performance improvement compared to the current layout, using all queries that have been serviced on the current layout. The method switches to a new layout when the cumulative saving in query cost exceeds the reorganization cost. 

\minihead{\sysname} A prototype that matches the descriptions given so far. Unless otherwise specified, the default parameter values for \sysname in the experiments are $\epsilon=0.08$ and $\gamma=1$.

\subsection{End-to-end Results}
\label{sec:evalmain}
Figure~\ref{fig:spark} summarizes the end-to-end query and reorganization time incurred by different methods. The bottom half of each bar represents query time, while the top half with the hatches represents reorganization time. On the datasets that we experimented with, dynamic reorganization can outperform a single, precomputed data layout for the entire workload (Static) by 32.5\%, 18.6\% and 10.8\% respectively using Qd-trees as the underlying partitioning technique. Compared to using Qd-trees, layouts generated using simple heuristics such as Z-ordering tend to have worse data skipping ratio, as reflected by the increased query costs. One notable exception is that for the TPC-DS dataset, the static Z-ordering layout outperforms the Qd-tree layout despite having a worse data skipping ratio. This is because the Z-ordering layout has a better compression ratio (16\% smaller in file size) and the compression ratio is not taken into account in the design of Qd-trees. On the other two datasets, \sysname is still 17.6\% and 9.4\% better compared to the best static layout. 

Among the three online reorganization strategies, \revision{the Greedy baseline is the most aggressive in reorganization: it always switches to a better-performing layout regardless of the reorganization cost. Therefore, this baseline represents the smallest query costs that can be obtained by online strategies that share the same set of layout candidates. However, the downside is that it incurs large reorganization costs, especially when $\alpha$ is large. } 
In contrast, the Regret baseline is most conservative, thus having overall large query costs and small reorganization costs. \sysname lies somewhere in between and achieves the best overall cost in all but one case. In addition, Greedy and Regret both become less effective with Z-ordering. Since layouts generated from Z-ordering generally offer smaller improvements in query performance, the two baselines end up making fewer layout changes and incurring query costs that are close to those of the static layout. In comparison, \sysname remains dynamic and offers noticeable improvements in query costs compared to the best static layout. 

\begin{figure}[t]
\centering
    \includegraphics[width=\linewidth]{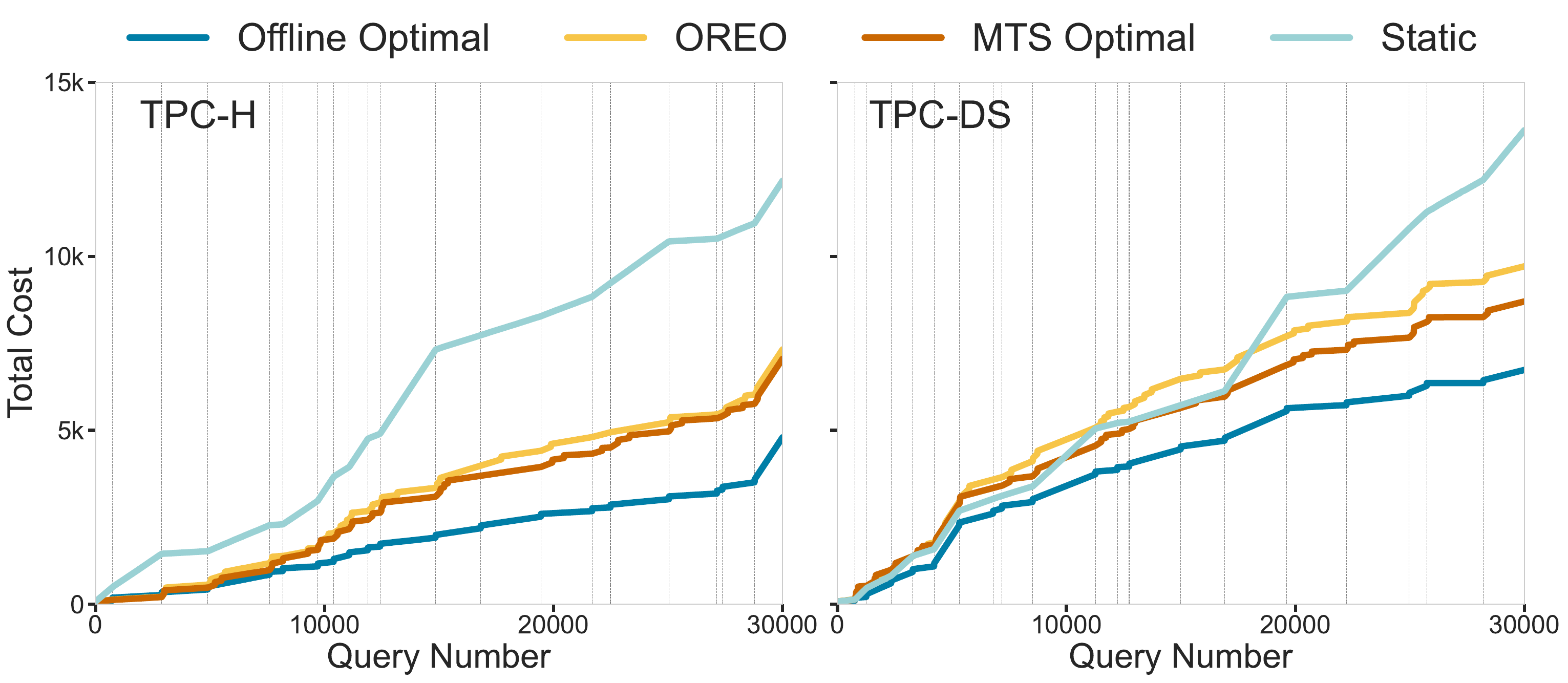}  
    \caption{\sysname's query costs is 74\% and 44\% larger than those of the optimal offline algorithm that observes the entire workload in advance and can switch states. Vertical gray lines indicate when the workload switches to a different query template.}
    \label{fig:spark-line}
    \vspace{-1em}
\end{figure}

\begin{figure*}[ht]
    \centering
    \includegraphics[width=0.85\linewidth]{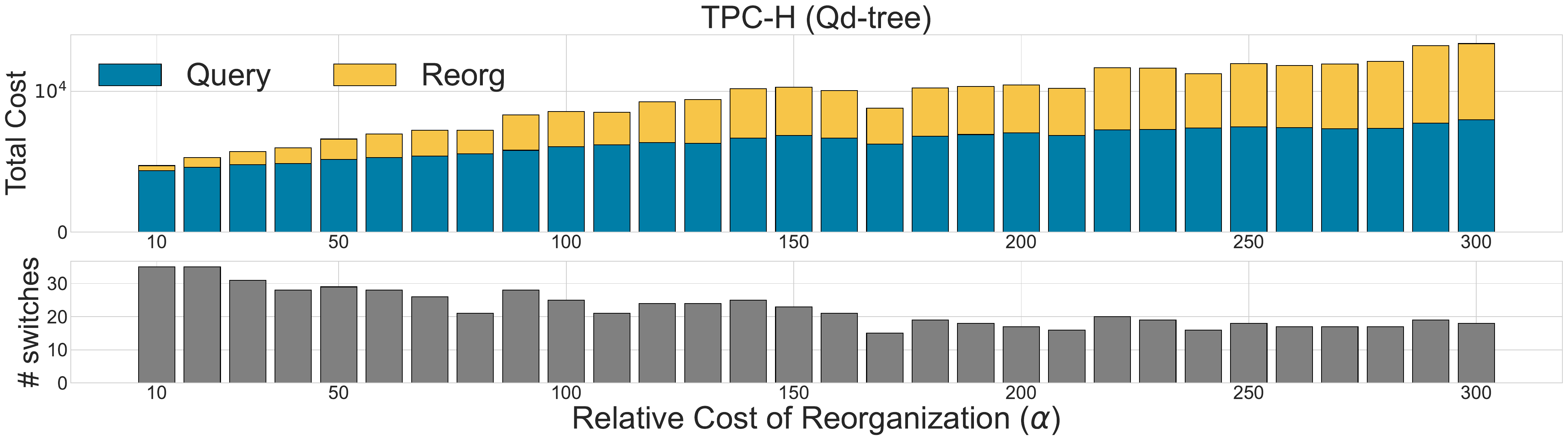}
    \caption{Impact of reorganization cost ($\alpha$) on the overall performance. As reorganization becomes more expensive, the total gains from dynamic reorganization decrease. The decrease is not monotonic due to changes in reorganization strategies. }
    \label{fig:alpha}
\end{figure*}

\subsection{Gap to Optimal Algorithms}
\label{sec:oracle}
In the previous experiments, we compared with a static offline algorithm that observes the entire workload in advance but is not allowed to switch states. Note that our theoretical bounds are with respect to {\em any} algorithm that sees the entire workload in advance, even those that are able to switch states. To further understand the quality of the solutions produced by \sysname, we compare with two additional methods that are allowed to switch states and leverage additional workload information:

\begin{itemize}[leftmargin=*,topsep=2pt]
    \item MTS Optimal: Instead of incrementally generating new data layouts, the method is given a fixed state space that includes the best data layout precomputed for each query template. The method uses \sysname's modified MTS algorithm to determine how to switch between layout candidates. 
    \item Offline Optimal: The method is presented with the entire workload in advance and switches to the best data layout for a query template as soon as template changes in the workload. Since this benchmark algorithm is given knowledge of the entire workload and has complete flexibility in changing the layout, it gives the lower bound on the query costs for any online solutions. 
\end{itemize}

Figure~\ref{fig:spark-line} visualizes how the total costs change over the course of the query stream for all methods of comparison. The vertical gray lines in the background indicate when the workload switches to a different query template. Overall, \sysname (yellow) using the dynamic state space performs slightly worse than the MTS Optimal (orange) which uses a fixed, precomputed state space. \sysname's query costs are within 14\% and 17\% of the query costs of the MTS Optimal. This shows that additional workload information can benefit online algorithms by improving the quality of the state space. In addition, \sysname's costs are 74\% and 44\% larger on the two datasets compared to the query costs of Offline Optimal (dark blue). Note that this is much better than the worst case $O(\log k)$ bound provided by the analysis, and the larger gap is because the Offline Optimal knows the entire workload and therefore does not experience any delay between template change and layout switches. MTS Optimal and \sysname on the other hand, use online algorithms and do not have access to such information. The Offline Optimal makes 20 layout changes in total, corresponding to the number of template switches. \sysname makes 22 and 29 layout changes while MTS Optimal makes 27 and 30 changes on the two datasets.

\subsection{Detailed Analysis}
\label{sec:evalparams}

In this section, we evaluate \sysname's behavior under changes in key framework parameters.

 \subsubsection{Effect of reorganization cost $\alpha$} 
 Figure~\ref{fig:alpha} shows the effect of relative reorganization cost $\alpha$ on the overall performance of the framework. Overall, the total gains from dynamic reorganization decrease as reorganization becomes more expensive. When $\alpha$ gets larger, the algorithm initially sticks to similar reorganization strategies despite the slightly increased reorganization cost. Eventually the reorganization becomes expensive enough that the algorithm adapts its strategy to make fewer layout changes in compensation. As a result, the number of layout changes decreases as $\alpha$ increases (35 changes at $\alpha=10$ and 18 changes at $\alpha=300$), with a few noticeable drops happening at around $\alpha$ = 80 and 170. This also explains why the overall costs do not increase monotonically as $\alpha$ increases. 
 
 In practice, users can measure typical values of $\alpha$ based on their system configuration to provide as inputs to \sysname. We run Spark in stand-alone mode with files stored using the Parquet format on a local hard drive and measure the time taken to run a full table scan SQL query versus data reorganization for files ranging from 16MB to 4GB. In particular, reorganization includes 1) reading partitions from disk 2) updating the \texttt{BID} column and 3) repartitioning the dataset based on \texttt{BID} and 4) compressing and writing the new partitions to disk. Overall, we observe that the cost ratios range from 60$\times$ to 100$\times$ (Table~\ref{tab:alpha}).

       

\begin{table}[t]
\caption{Relative cost of reorganization over query ($\alpha$) ranges from 60$\times$ to 100$\times$ in our setup.}
\begin{center}
\footnotesize
\begingroup
\setlength{\tabcolsep}{2pt} 
\renewcommand{\arraystretch}{1.2} 
\begin{tabular}{r r r r r r}

    \toprule
     & \multicolumn{5}{c}{\textbf{File Size (MB)}} \\
    \cmidrule{2-6} 
     & 16  & 64  & 256  & 1024  & 4096  \\
     \midrule
        Query (sec)&0.36$\pm$0.05  & 0.89$\pm$0.03  & 2.9$\pm$0.1  & 12.5$\pm$0.2  & 81.0$\pm$2.0 \\
        Reorg (sec)&24.6$\pm$0.7  & 70.0$\pm$4.3  & 276.6$\pm$4.1 & 1231.3$\pm$43.9 & 4854.1$\pm$88.9\\
        $\alpha$&69.0 & 78.7 & 95.4 & 98.4 & 59.9 \\
    \bottomrule

\end{tabular}
\endgroup
\label{tab:alpha}
\vspace{-1.5em}
\end{center}
\end{table}

 \begin{figure}
    \centering
    \includegraphics[width=\columnwidth]{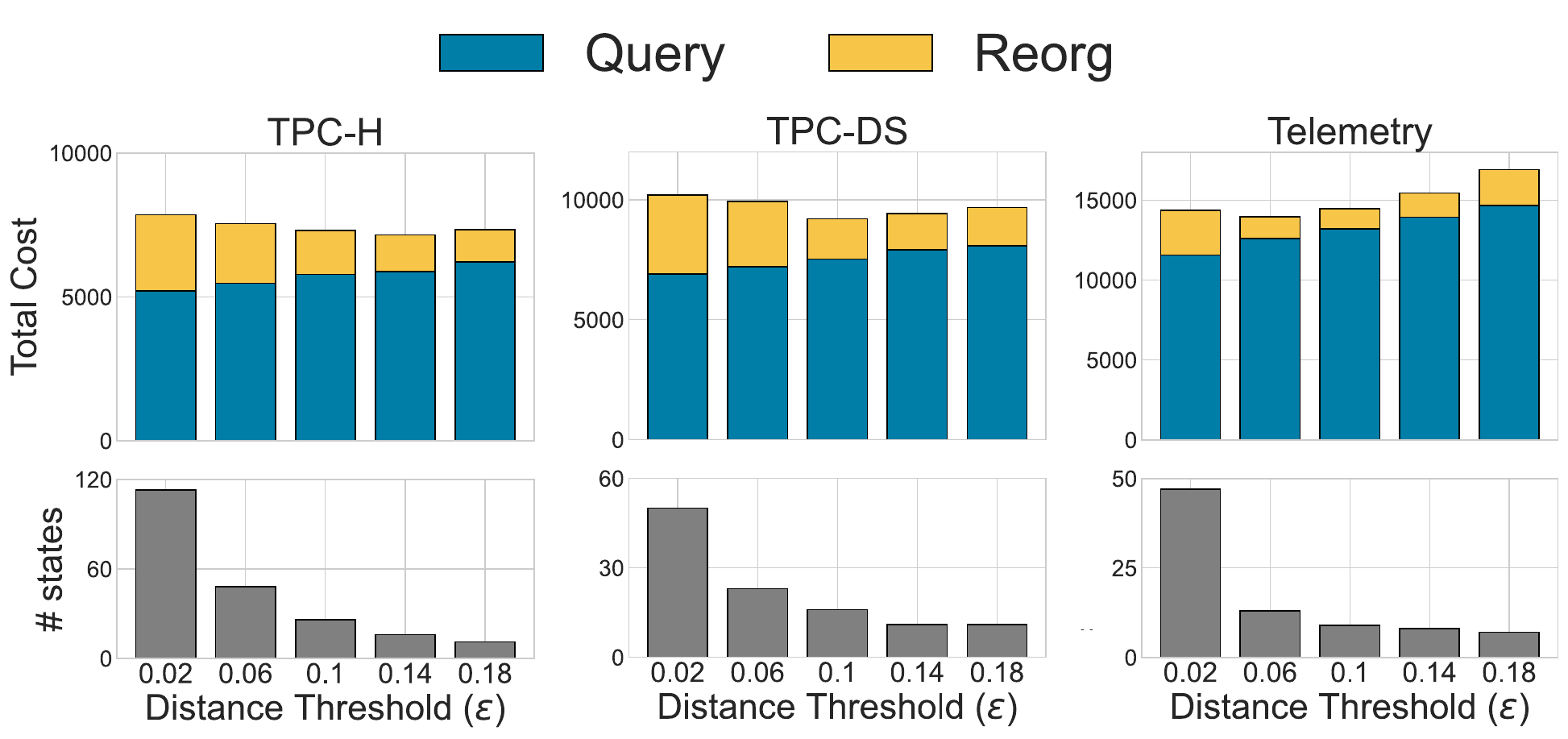}
    \caption{Impact of distance threshold ($\epsilon$) for admitting new layouts. Overall, the framework's performance is not very sensitive to specific choices of the $\epsilon$. }
    \label{fig:eps}
    \vspace{-1.5em}
\end{figure}

\begin{table}
    \caption{Impact of the transition distribution ($\gamma$), use of sliding window (SW) versus reservoir sampling (RS) for candidate data layout generation, and the impact of reorganization delay ($\Delta$) on the MTS algorithm. The rows in bold represent the default experiment parameter configuration. All results are reported in logical costs from simulation in unites of $10^3$. Changes above $5\%$ have been marked in the table.  }
    \footnotesize
      \centering
      \begingroup
        \setlength{\tabcolsep}{3.5pt} 
        \renewcommand{\arraystretch}{1.2} 
        \begin{tabular}{r l l l c l l l}
        \toprule
         & \multicolumn{3}{c}{\textbf{Query Cost}} & & \multicolumn{3}{c}{\textbf{Reorg Cost}}\\
      \cmidrule{2-4}  \cmidrule{6-8}
          & TPCH & TPCDS & Telemetry & & TPCH & TPCDS & Telemetry  \\
         \midrule
         \textbf{$\gamma$=1} & 5.56 & 7.39 & 12.60  && 1.68 & 2.24  & 1.52  \\
         $\gamma=0$ & 5.75  & 7.49 & 12.60 && 2.32  & 3.04  &  1.84 \\
          &  & &   && (+38\%) & (+36\%) & (+21\%)  \\
         $\gamma=2$ & 5.56 & 7.39 & 12.60  && 1.68  & 2.24   &  1.60 \\
           &  & &  &&   &   & (+5.3\%)\\
         $\gamma=3$ & 5.56 & 7.39 & 12.56 && 1.68  & 2.16  & 1.52 \\
        \midrule
         \textbf{SW} & 5.56 & 7.39 & 12.60  && 1.68 & 2.24  & 1.52 \\
         RS & 6.51 & 9.03 & 14.66 && 2.00 & 2.16 & 2.24 \\
           & (+17\%) & (+22\%) & (+16\%)  && (+19\%) &  & (+47\%) \\
         SW+RS & 5.59 & 7.19   & 12.55   && 2.40 & 3.04 & 1.44  \\
          &  &   &   && (+43\%) & (+36\%) &  (-5.3\%) \\
        \midrule
        \textbf{$\Delta$=0} & 5.56 & 7.39 & 12.60  && 1.68 & 2.24  & 1.52 \\
        $\Delta$=$40$ & 5.88  & 7.65 & 12.67  && 1.68 & 2.24  & 1.52 \\
          & (+5.7\%)  &  &   &&  &  &  \\
        $\Delta$=$80$ & 6.20 & 7.89 & 12.75  && 1.68 & 2.24  & 1.52\\
         & (+12\%) & (+6.8\%) &   &&   &  & \\
        \bottomrule
        \end{tabular}
    \endgroup
    \label{tab:gamma}
\end{table}

\subsubsection{Effect of distance threshold $\epsilon$}
Figure~\ref{fig:eps} reports the effect of changing the distance threshold $\epsilon$ on the size of the dynamic state space and the overall performance of the framework.  As $\epsilon$ increases, the size of the state space shrinks and we also observe a slight increase in query cost. The overall performance of the framework is not very sensitive to choices of the distance threshold $\epsilon$, which makes it easy for users to set default parameters for the framework. 

\subsubsection{Effect of non-uniform transition distribution $\gamma$}
Table~\ref{tab:gamma} reports the effect of improving the transition distribution with a non-uniform transition distribution enabled by a state performance predictor (\S~\ref{sec:learned}). Specifically, the parameter $\gamma$ controls how much the transition distribution favors states that performed well in the last ``phase" of the algorithm. The original MTS algorithm uses a uniform transition distribution ($\gamma=0$). Overall, using a biased distribution ($\gamma >$  0) improves the reorganization costs of the randomized algorithm by 17.3\% to 27.6\% but does not have a significant impact on the query costs. The performance is not very sensitive to specific choices of $\gamma$. 

\subsubsection{Sliding window vs reservoir sampling}
\sysname's layout manager continuously generates data layout candidates using a sliding window (SW) of recent queries. We evaluate the effect of alternative workload sampling strategies such as reservoir sampling (RS) and a combination of candidates from sliding window and reservoir sampling (SW+RS). Our results in Table~\ref{tab:gamma} shows that the use of reservoir sampling led to an increase in query costs by up to 22\% and reorganization costs by up to 47\% when compared to the sliding window strategy. In addition, while the combined strategy (SW+RS) exhibited similar query costs to using a sliding window alone, the reorganization costs can increase by up to 43\%.

\subsubsection{Effect of reorganization delay $\Delta$}
\sysname performs reorganization in the background, meaning that while a new layout is being created, some queries may still be using the old layout. We study the effect of the delay in the background reorganization on performance by changing the number of queries that are executed using the outdated layout each time the reorganizer decides to switch layouts. The delay does not impact the cost of the reorganization, as the cost is incurred as soon as the decision is made. However, longer delays lead to increased query costs, since the query savings do not take effect until the actual layout switch happens. The query costs increase by around 7 to 12\% compared to the case where there is no delay when the number of queries served on outdated layouts equals $\alpha$. In practice, this latency can be further decreased by dedicating more compute resources for reorganization.

\section{Related Work}
\label{sec:related}


\subsubsection{Data Layout Optimization}
Data layouts are mapping functions that assign each record in the dataset to different partitions. Partitions are often stored as individual files using compressed, columnar format on local disks or in remote storage. 

Traditional layout designs such as round-robin, range, and hash partitioning use mapping functions that are independent of both the data distribution and the query workload~\cite{dewitt1992parallel,larson2013enhancements}. More recently, researchers have started to explore specialized layouts that purposefully overfit the layout design to specific datasets and workloads to achieve superior data skipping performance~\cite{kraska2021towards,nathan2020learning,li2020lisa}. Fine-grained workload information such as query predicates are often used in these works to tightly couple the layout designs with target query workloads~\cite{sun2014fine,yang2020qd,ding2021instance,sudhir2023pando}. Our framework leverages these recent developments in workload-aware data layouts as black boxes to generate a list of candidate data layouts that the systems can switch between. In fact, the better these workload-aware layouts perform, the more query cost we can potentially save by dynamically switching between them. \sysname is agnostic to the specific partitioning mechanism used, as long as it can produce different data layouts given different target query workloads. 

However, workload-aware layout designs can experience significant performance degradation when the target query workload changes. We provide an example in Appendix A of the technical report~\cite{tr}, which shows that a static layout results in almost no savings under changing workloads. Recent work has proposed strategies such as modeling and embedding the variance of workloads into the partitioning design to improve its robustness to workload drifts~\cite{lipaw}. Instead of improving the layout design, our work takes an orthogonal approach and explores how to make better use of existing layouts. Many more opportunities remain in the design and implementation of such systems that can self-optimize and adapt to changes to workload and data~\cite{sdc}.

\subsubsection{Automatic Tuning in Databases}
The online layout optimization problem is an instance of a broader class of problems that aim to automatically tune the physical design and configurations to improve the performance of database systems~\cite{agrawal2005database,chaudhuri2000rethinking,zilio2004db2,chaudhuri2007self,van2017automatic,pavlo2021make}. We focus the discussion below on online settings in which the system adapts the tuning based on changes in the query workload. 

One promising approach is to make tuning decisions according to predictions of future behaviours. For example, researchers 
have used supervised learning to predict future workload patterns and the performance of different system configurations~\cite{ma2018query,akdere2012learning,ding2019ai,mahgoub2020optimuscloud}, reinforcement learning to model system behavior through experiences collected from interacting with the environment~\cite{durand2018gridformation,10.1145/3329859.3329876,learnadvisor}, \revision{or make assumptions on the distribution of query workloads~\cite{ding2021instance}. For example, MTO~\cite{ding2021instance} designs a reward function under the assumption that a certain number of additional queries from the same distribution can be executed before the workload switches, and searches for the reorganization strategy with the maximal reward via dynamic programming. In constrast, \sysname uses online algorithms that neither assume prior knowledge of the query workload nor a specific workload distribution.}
\revision{Another approach is to design rules or heuristics for reorganization conditions, such as current industry practices discussed in \S~\ref{sec:scenario}. For example, SAT~\cite{xie2023sat} monitors the ratio of the actual query selectivity and the data skipping rate, and triggers reorganization process when the ratio is a below certain threshold. While \sysname is inspired by rule-based designs, it introduces a formal framework with worst-case performance guarantees. }



\subsubsection{Adaptive Index Tuning using Online Algorithms}
Dynamic layout optimization is closely related to a line of work that leverages online algorithms for index tuning and recommendation in database systems~\cite{bruno2007online,schnaitter2010semi,malik2009adaptive}. In indexing tuning, one wants to adaptively create and delete in-memory indices to reduce query execution costs based on workload characteristics. 

Adaptive index tuning has a different problem setup since it assumes asymmetric movement costs and a fixed state space. In index tuning, (nonclustered) indexes store pointers to data records and therefore, maintaining multiple indices requires one copy of the data with some additional storage for the indexing data structures. In contrast, data layouts determine how the dataset is actually sorted and stored and {\em realizing} multiple data layouts require storing multiple copies of the dataset; this should not be confused with {\em evaluating} the query costs on multiple data layouts, which can be estimated using partition-level metadata without accessing the underlying dataset. As a result, in index tuning, there is only the cost of creating additional indexes, and little to no cost with the ``changing" of states, introducing an \emph{asymmetric} movement cost. This difference allows algorithms in our model to achieve a much lower competitive ratio, as uniform metrics are much easier to analyze than asymmetric costs. For example, the {\sc WFIT} algorithm~\cite{schnaitter2010semi} achieves a competitive ratio exponential in the number of possible indexes, whereas our ratio grows logarithmically. In addition, in all prior works on index tuning, the theoretical analysis assumes that the state space of possible indices on the table is fixed.

Metrical task systems can still be applied for asymmetric movement costs, albeit with a worse competitive ratio. The original paper of Borodin \etal~\cite{BLS92} provides an $O(|S|^2)$-competitive algorithm for asymmetric costs following the triangle inequality. In~\cite{bruno2007online}, a 3-competitive algorithm for two-state MTS with asymmetric costs was presented as a special case. We include a proof of an improved competitive ratio for the classic algorithm~\cite{BLS92} in this special case in Appendix C~\cite{tr}.
\section{Discussion and Future Work}
\label{sec:discuss}
We outline key directions for future exploration.

First, our study primarily investigates single-table data layouts, but \sysname is also compatible with multi-table configurations. In such setups, each table can maintain its own instance of \sysname and make decisions based on a subset of query predicates relevant to the table. We report preliminary results in Appendix B of the technical report~\cite{tr}, which show that multi-table layouts that utilize predicates induced from joins~\cite{ding2021instance,dipSri2019} show greater benefits from dynamic reorganization compared to layouts that optimize each table separately.

Second, \sysname is based on results from uniform metrical task systems, which assume uniform switching costs between any pair of states. Extending our framework to support non-uniform metrics would increase the possible state space of data layouts. However, algorithms for non-uniform metrics in MTS~\cite{FRT04,FM03,BCLL21,CL19,EL22} are considerably more complex. Adapting them to the dynamic variant we introduced in this paper presents an interesting venue for future work.

\revision{Third, \sysname does not keep additional copies of the data with different layouts, except temporarily during reorganization. However, if we have the storage budget to maintain multiple layouts of the same dataset simultaneously, we could extend \Cref{alg:update-states} to accommodate this scenario. A variant of our algorithm tailored for this purpose is introduced in Appendix D of the technical report~\cite{tr}. Further analysis of the optimal tradeoffs involved in maintaining multiple layouts, as well as exploration of alternative variants of our algorithm, could be pursued in an extended work.}





\section{Conclusion}
\label{sec:conclude}
We introduce \sysname, an algorithmic framework that decides when and how to change the data layout to minimize overall data access and movement over the entire query stream, without prior knowledge of the query workload. \sysname leverages recent developments in workload-aware data layout designs and offers a systematic way to reason about the trade off between query and reorganization costs when utilizing such layout designs. \sysname makes reorganization decisions in an online fashion by extending classic results for metrical task systems to support dynamic state space and achieves a tight competitive ratio that is logarithmic in the maximum size of the dynamic state space. Our empirical findings show that dynamic reorganization using \sysname achieves sizable speed ups in end-to-end query and reorganization time compared to using a single, optimized data layout throughout, and compares favorably with oracles that observe the entire query stream in advance and switch states dynamically.  

\section{Acknowledgment}
Research reported in this work was supported by an Amazon Research Award Fall 2023. Any opinions, findings, and conclusions or recommendations expressed in this material are those of the author(s) and do not reflect the views of Amazon. Moses Charikar was supported by a Simons investigator award.

\bibliographystyle{IEEEtran}
\bibliography{IEEEabrv,dumts}


\end{document}